\newtheorem{lemma}{Lemma}
\begin{document}

\title{Comment on ``Unified view of quantum correlations and quantum coherence''}

\author{Yao Yao}
\email{yaoyao@mtrc.ac.cn}
\affiliation{Microsystems and Terahertz Research Center, China Academy of Engineering Physics, Chengdu Sichuan 610200, China}
\affiliation{Institute of Electronic Engineering, China Academy of Engineering Physics, Mianyang Sichuan 621999, China}

\author{Li Ge}
\affiliation{School of Science, Hangzhou Dianzi University, Hangzhou Zhejiang 310018, China}

\author{Mo Li}
\email{limo@mtrc.ac.cn}
\affiliation{Microsystems and Terahertz Research Center, China Academy of Engineering Physics, Chengdu Sichuan 610200, China}
\affiliation{Institute of Electronic Engineering, China Academy of Engineering Physics, Mianyang Sichuan 621999, China}

\date{\today}

\begin{abstract}
We show that contrary to what it is claimed in Phys. Rev. A \textbf{94}, 022329 (2016), in general the local projective
measurement that induces maximal coherence loss is \textit{not} the projection onto the local basis that defines the coherence of
the system, at least for all quantum-incoherent states.
\end{abstract}

\pacs{03.65.Ta, 03.67.Mn}

\maketitle

For a single-partite system $\rho=\sum_{i,j}\rho_{i,j}|i\rangle\langle j|$ with a reference (coherence-defining) basis $\{|i\rangle\}$,
it is obvious that the measurement that maximally eliminates the coherence in the system is the projective measurement
$\Pi(\rho)=\sum_i|i\rangle\langle i|\rho|i\rangle\langle i|$. Motivated by this intuition,
in the Appendix of Ref. \cite{Tan2016}, the authors attempted to prove that the local projective
measurement that induces maximal coherence loss is the projection onto the local basis that defines the coherence of
the system, \textit{even} for bipartite or multipartite quantum states, which is formally stated in the following.

\textit{Proposition.} For any bipartite state $\rho_{AB}=\sum_{i,j,k,l}\rho_{i,j,k,l}|i,j\rangle\langle k,l|$ where the coherence is measured
with respect to the local reference bases $\{|i\rangle_A\}$ and $\{|j\rangle_B\}$, the projective measurement on subsystem $B$
that induces maximal coherence loss is $\Pi_B(\rho_{AB})=\sum_j(\openone_A\otimes|j\rangle_B\langle j|)\rho_{AB}(\openone_A\otimes|j\rangle_B\langle j|)$.

To review the (flawed) proof, here we adopt the same notations introduced in \cite{Tan2016}. By using the spectral decomposition of
$\rho=\sum_np_n|\psi^n\rangle_{AB}\langle\psi^n|$, the authors assumed the following matrix representation of $\rho_{AB}$ with respect to
the local reference bases $\{|i\rangle_A\}$ and $\{|j\rangle_B\}$
\begin{align}
\rho_{AB}=\sum_n\sum_{i,j,k,l}p_n\psi^n_{i,j}(\psi^n_{k,l})^\ast|i,j\rangle_{AB}\langle k,l|.
\end{align}
Note that the coherence of the system is measured with respect to these bases. Consider some complete basis $\{|\lambda_m\rangle\}$ on $B$, and
corresponding projective measurement
$\Pi_B(\rho_{AB})=\sum_m(\openone_A\otimes|\lambda_m\rangle_B\langle \lambda_m|)\rho_{AB}(\openone_A\otimes|\lambda_m\rangle_B\langle\lambda_m|)$.
By computing the matrix elements, the authors \textit{in fact} proved that the coherence of the post-measurement state $\Pi_B(\rho_{AB})$
is lower bounded by the coherence of the reduced state of subsystem $A$, namely
\begin{align}
\sum_{i,j,k,l}\left|[\Pi_B(\rho_{AB})]_{i,j,k,l}\right|\geq\sum_{i,k}\left|\sum_{n,p}p_n\psi^n_{i,p}(\psi^n_{k,p})^\ast\right|,
\label{inequality1}
\end{align}
where $\rho_A=\sum_{i,k}\sum_{n,j}p_n\psi^n_{i,j}(\psi^n_{k,j})^\ast|i\rangle\langle k|$. For comparison, when $|\lambda_j\rangle=|j\rangle$
the absolute sum of the elements of $\Pi_B(\rho_{AB})$ should be
\begin{align}
\sum_{i,j,k,l}\left|[\Pi_B(\rho_{AB})]_{i,j,k,l}\right|=\sum_{i,j,k}\left|\sum_np_n\psi^n_{i,j}(\psi^n_{k,j})^\ast\right|.
\label{inequality2}
\end{align}
Obviously, the right hand side of Eq. (\ref{inequality2}) is generally larger than (that is, inequivalent to) that of Eq. (\ref{inequality1}) and thus
the claim of the authors is not correct.

For a simple counterexample for the Proposition, we can evaluate the coherence of the following state
\begin{align}
\rho_{AB}=\frac{1}{2}|+\rangle_A\langle+|\otimes|0\rangle_B\langle0|+\frac{1}{2}|-\rangle_A\langle-|\otimes|1\rangle_B\langle1|,
\end{align}
where $|\pm\rangle=\frac{1}{\sqrt{2}}(|0\rangle\pm|1\rangle)$ and the computational basis is assumed to be the coherence-defining basis.
Now if we perform a local projective measurement in the computational basis of subsystem $B$, the coherence of the bipartite system is invariant
and no coherence loss occurs since the whole state keeps unchanged. However, if we perform a local projective measurement in the dual basis
$\{|\pm\rangle\}$, the coherence of the bipartite system is \textit{completely} eliminated since in this case
$\Pi_B(\rho_{AB})=\frac{1}{2}\openone_A\otimes\frac{1}{2}\openone_B$ and the identity operator $\openone_{A(B)}$ is incoherent in any basis.

Indeed, this example can be extended to arbitrary quantum-incoherent states \cite{Chitambar2016}, which are of the following form
\begin{align}
\chi_{AB}=\sum_ip_i\rho^A_i\otimes|i\rangle_B\langle i|,
\end{align}
where $\{|i\rangle_B\}$ is the incoherent basis for subsystem $B$. Inspired by the above example, we perform a local projective measurement
in a mutually unbiased basis $\{|\lambda_j\rangle_B\}$ with respect to $\{|i\rangle_B\}$ \cite{Bandyopadhyay2002,Durt2010}, which means
$|\langle i|\lambda_j\rangle|^2=\frac{1}{d_B}$ for all $i$ and $j$ ($d_B$ denotes the dimension of subsystem $B$).
In this circumstance, the post-measurement state is given as
\begin{align}
\Pi_B^{\{|\lambda_j\rangle\}}(\chi_{AB})=\sum_ip_i\rho^A_i\otimes\sum_j\frac{1}{d_B}|\lambda_j\rangle_B\langle\lambda_j|=\rho_A\otimes\frac{1}{d_B}\openone_B.
\end{align}
For any valid coherence measure $C(\bullet)$ defined in the framework of \cite{Baumgratz2014},
it is easy to see that
\begin{align}
C(\Pi_B^{\{|i\rangle\}}(\chi_{AB}))&=C(\chi_{AB})=\sum_ip_iC(\rho^A_i) \nonumber\\
&\geq C(\sum_ip_i\rho^A_i)=C(\rho_A)=C(\Pi_B^{\{|\lambda_j\rangle\}}(\chi_{AB})),
\label{inequality3}
\end{align}
where we have used the convexity of $C(\bullet)$ \cite{Baumgratz2014}. The above inequality
indicates that the maximal coherence loss is alternatively induced by a projective measurement
in an arbitrary mutually unbiased basis (see also Eq. (\ref{inequality1})), which
is obviously opposed to the Proposition.

Note that, in Eq. (\ref{inequality3}), we have used the following lemma.
\begin{lemma}
For any valid coherence measure, we have $C(\rho\otimes\sigma_{\textrm{inc}})=C(\rho)$, where
$\sigma_{\textrm{inc}}$ is an incoherent state.
\end{lemma}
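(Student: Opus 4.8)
The plan is to establish the two inequalities $C(\rho\otimes\sigma_{\textrm{inc}})\le C(\rho)$ and $C(\rho)\le C(\rho\otimes\sigma_{\textrm{inc}})$ separately, in each case by writing down an explicit incoherent channel that connects the two states and then invoking the monotonicity of $C$ under incoherent operations, which is one of the defining requirements of a valid coherence measure in the framework of \cite{Baumgratz2014}. Throughout, $\{|i\rangle_A\}$ and $\{|j\rangle_B\}$ denote the reference bases, $\sigma_{\textrm{inc}}=\sum_j q_j|j\rangle_B\langle j|$ is the diagonal (spectral) form of the incoherent ancilla, and the reference basis of the composite system is taken to be the product basis $\{|i\rangle_A\otimes|j\rangle_B\}$.

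For the upper bound, I would consider the ``append the ancilla'' map $\rho\mapsto\rho\otimes\sigma_{\textrm{inc}}$, realized by the Kraus operators $K_j=\sqrt{q_j}\,(\openone_A\otimes|j\rangle_B)$ taking $\mathcal{H}_A$ into $\mathcal{H}_A\otimes\mathcal{H}_B$. One checks $\sum_j K_j^\dagger K_j=\openone_A$, so the map is trace preserving, and that $K_j$ sends any incoherent state $\sum_i\delta_i|i\rangle_A\langle i|$ to $q_j\sum_i\delta_i\,|i\rangle_A\langle i|\otimes|j\rangle_B\langle j|$, which is a (subnormalized) incoherent state of the composite system; hence the channel is incoherent and monotonicity gives $C(\rho\otimes\sigma_{\textrm{inc}})\le C(\rho)$. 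For the reverse bound, I would use the partial trace $\mathrm{Tr}_B$, realized by the Kraus operators $L_j=\openone_A\otimes\langle j|_B$. Again $\sum_j L_j^\dagger L_j=\openone_{AB}$, and $L_j$ maps any incoherent state $\sum_{i,k}\delta_{ik}\,|i\rangle_A\langle i|\otimes|k\rangle_B\langle k|$ to the incoherent state $\sum_i\delta_{ij}|i\rangle_A\langle i|$, so $\mathrm{Tr}_B$ is an incoherent operation as well. Since $\mathrm{Tr}_B(\rho\otimes\sigma_{\textrm{inc}})=\rho$, monotonicity yields $C(\rho)\le C(\rho\otimes\sigma_{\textrm{inc}})$, and combining the two inequalities proves the lemma.

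The only delicate point I anticipate is bookkeeping about Hilbert spaces of different dimension: the monotonicity axiom of \cite{Baumgratz2014} is usually phrased for channels acting on a fixed space, so one must note that it applies verbatim to channels that adjoin or discard an incoherent ancilla, because the associated Kraus operators still carry the incoherent set of the smaller space into the incoherent set of the larger (product) space and conversely. The hypothesis that $\sigma_{\textrm{inc}}$ is diagonal in the reference basis enters precisely here: if $\sigma$ carried any coherence, the appending map would no longer be incoherent, and in fact the identity $C(\rho\otimes\sigma)=C(\rho)$ would fail in general.
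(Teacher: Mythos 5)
Your proof is correct and is essentially the first of the two arguments given in the paper: the paper likewise notes that appending an incoherent ancilla and taking the partial trace are both incoherent operations and sandwiches $C(\rho\otimes\sigma_{\textrm{inc}})$ between $C(\rho)$ and $C(\rho)$ by monotonicity, though it cites these facts rather than exhibiting the Kraus operators as you do. (The paper also offers a second, independent proof via the block-decomposition property $C(\oplus_i p_i\rho_i)=\sum_i p_iC(\rho_i)$, which you do not need.)
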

\begin{proof}
Here we present two alternative proofs. First, we can adopt the methodology in Ref. \cite{Bu2017}, namely,
the dismissal quantum operation (partial trace) and the appending quantum operation (with an incoherent state)
are all incoherent operations. Therefore, by using the monotonicity of the coherence measures under
incoherent operations \cite{Baumgratz2014}, we have the inequality
\begin{align}
C(\rho)\geq C(\rho\otimes\sigma_{\textrm{inc}})\geq C(\rho),
\end{align}
which implies $C(\rho\otimes\sigma_{\textrm{inc}})=C(\rho)$. Alternatively, we can also employ the framework
proposed in \cite{Yu2017}, which is equivalent to that of \cite{Baumgratz2014}. Yu \textit{et al.} proved
that a valid coherence measure should satisfy the following condition: $C(p_1\rho_1\oplus p_2\rho_2)=
p_1C(\rho_1)+p_2C(\rho_2)$ for block-diagonal states $\rho$ in the incoherent basis. Thus,
since $\sigma_{\textrm{inc}}=\sum_ip_i|i\rangle\langle i|$, we have
\begin{align}
C(\rho\otimes\sigma_{\textrm{inc}})=C(\oplus_ip_i\rho)=\sum_ip_iC(\rho)=C(\rho),
\end{align}
which completes the proof.
\end{proof}

In conclusion, we have proved that the Proposition raised by the authors of Ref. \cite{Tan2016} is not valid, at least
for all quantum-incoherent states, where the projective measurements in mutually unbiased bases play a significant role.
Furthermore, we believe this problem is highly nontrivial and probably state-dependent.
A thorough solution to this problem is still left as an open question.

\begin{acknowledgments}
This research is supported by the Science Challenge Project (Grant No. TZ2017003)
and the National Natural Science Foundation of China (Grant No. 11605166).
\end{acknowledgments}


\end{document}